\documentclass[journal,twoside,web]{ieeecolor}
\usepackage{lcsys}
\usepackage{cite}
\usepackage{amsmath,amssymb,amsfonts}
\usepackage{algorithmic}
\usepackage{graphicx}
\usepackage{textcomp}
\usepackage{dsfont}
\usepackage{caption}
\usepackage{subcaption}
\newtheorem{assumption}{Assumption}
\newtheorem{proposition}{Proposition}
\newtheorem{theorem}{Theorem}
\newtheorem{remark}{Remark}

\begin{document}
\title{Entropy-Regularized Optimal Transport \\
for Time-Varying Multi-Agent Coverage Control
}
\author{Italo Napolitano, \IEEEmembership{Graduate Student Member, IEEE}, Mario di Bernardo, \IEEEmembership{Fellow, IEEE}
\thanks{(Corresponding author: Mario di Bernardo.) Italo Napolitano and Mario di Bernardo are with the Scuola Superiore Meridionale, Naples, Italy (e-mail: i.napolitano@ssmeridionale.it; mario.dibernardo@unina.it). Mario di Bernardo is also with the Department of Electrical Engineering and Information Technology of the University of Naples Federico II, Naples, Italy.}
}

\maketitle
\thispagestyle{empty}
\begin{abstract}
This paper addresses time-varying coverage control for multi-agent systems, formulated as the tracking of an evolving target density via entropy-regularized semi-discrete optimal transport. Unlike the hard Laguerre partition of the unregularized formulation, entropic regularization assigns fractions of the mass at each point to all agents, simplifying the design and numerical implementation of the control law. We derive a feedback-feedforward controller that tracks the evolving first-order optimality conditions with exponential convergence, and we investigate the role of the regularization parameter, which, above a critical threshold, renders the fully collapsed configuration locally optimal. Numerical experiments validate the theory: for moderate regularization, the proposed approach has performance comparable to the unregularized formulation and outperforms the corresponding Voronoi-based coverage baseline.
\end{abstract}

\begin{IEEEkeywords}
Agents-based systems, Cooperative control, Coverage control, Density control, Entropic regularization, Optimal transport, Swarm robotics, Wasserstein distance
\end{IEEEkeywords}

\section{Introduction}
\label{sec:introduction}
\IEEEPARstart{C}{oordinating} mobile agents toward a desired spatial configuration is a fundamental problem in the control of multi-agent systems. In many applications, however, prescribing individual target positions is unnecessarily restrictive: the objective is instead specified at the collective or probabilistic level through a desired spatial distribution \cite{bandyopadhyay2014probabilistic,lin2025heterogeneous}. This motivates a distributional formulation in which the agents are represented by their empirical measure and are steered toward a possibly time-varying target density, without prescribing an individual target position for each agent.

Optimal transport (OT) provides a natural framework for this problem, as the Wasserstein distance incorporates the geometry of the physical domain and allows discrete agent distributions to be compared directly with the continuous target densities \cite{peyre2019computational}. While several approaches to density control rely on mean-field or continuum descriptions \cite{chen2021optimal,emerick2023continuum,maffettone2025leader}, here we retain the finite-dimensional nature of the multi-agent dynamics and formulate the resulting discrete-to-continuous problem through semi-discrete OT \cite{tacskesen2023semi}. This provides a direct connection between individual agent positions and collective objectives in probability space. This perspective is also related to coverage control, where spatial densities encode the relative importance of different regions and classical formulations lead to Voronoi-based deployments \cite{cortes2004coverage,lee2015multirobot,lin2025heterogeneous}. Recent works have connected coverage objectives with semi-discrete OT and Laguerre partitions \cite{inoue2020optimal,krishnan2022multiscale,krishnan2024distributed}. Building on this interpretation, our previous work \cite{napolitano2026optimal} formulated time-varying coverage as an unregularized semi-discrete OT problem, in which agent positions and Kantorovich dual variables evolve jointly to track the first-order optimality conditions associated with a time-varying density. Such a formulation, however, inherits the hard partition induced by Laguerre cells, so that transport assignments change through moving cell boundaries. This lack of smoothness may be undesirable in a continuously evolving feedback problem, as moving cell boundaries complicate sensitivities, require geometric regularity assumptions, and lead to nonsmooth closed-loop analysis.

In this work, we introduce entropic regularization into the time-varying semi-discrete optimal transport formulation we first proposed in its unregularized form in \cite{napolitano2026optimal}. Entropic OT replaces hard cell-based assignments with smooth transport weights, thereby regularizing the dependence of the transport problem on both the agent positions and the dual variables \cite{peyre2019computational,tacskesen2023semi}. It has also been employed for model predictive control in the discrete-discrete setting~\cite{ito2023entropic}. In contrast to these existing uses, the present work employs entropy not merely as a computational tool, as is common in OT~\cite{cuturi2013sinkhorn}, but as a control-oriented regularization that facilitates the design of smooth continuous-time feedback laws. The regularization parameter provides a natural trade-off between fidelity to the original Wasserstein problem and smoothness of the induced transport allocation.
More broadly, the resulting formulation provides a framework for time-varying multi-agent density tracking and formation control in probability space. Unlike continuum approaches, it acts directly on a finite collection of agents, while allowing the target density to encode time-varying formations without prescribing individual agent assignments. In this Letter, we focus on the centralized setting. Our contributions are threefold: (i) we formulate time-varying density tracking for finite-agent systems as an entropy-regularized semi-discrete OT problem; (ii) we derive the associated regularized optimality conditions and continuous-time dynamics for the agent positions and dual variables, replacing the nonsmooth Laguerre partition with smooth transport weights; and (iii) we investigate the influence of the entropic regularization parameter.

\section{Mathematical Preliminaries}
Scalars are lowercase ($x$), vectors bold lowercase ($\mathbf{x}$), and matrices bold uppercase ($\mathbf{A}$), with maximum eigenvalue $\lambda_\mathrm{max}(\mathbf A)$. The operator $\text{vec}(\cdot)$ stacks vectors, e.g., $\mathbf{p} := \text{vec}(\mathbf{p}_1, \ldots, \mathbf{p}_N)$. The $N$-dimensional identity matrix is denoted by $\mathbf{I}_N$. Matrices of zeros and ones are denoted by $\mathbf{0}_{N \times M}$ and $\mathbf{1}_{N \times M}$, respectively. When $M=1$, they are $\mathbf{0}_N$ and $\mathbf{1}_N$. The Kronecker product is $\otimes$, while $\delta_{ij}$ is the Kronecker delta.
The set of Borel probability measures on $\mathbb R^d$ with finite second moment is $\mathcal{P}_2(\mathbb R^d)$, and $\mathcal{W}_p(\cdot,\cdot)$ denotes the $p$-Wasserstein distance. As in \cite{inoue2020optimal}, $\mathcal{W}_2^2$ is the transport value associated with the half-squared Euclidean cost $\tfrac{1}{2}\lVert\cdot\rVert_2^2$.

\subsection{Semi-Discrete Optimal Transport}
OT compares probability measures by minimizing the cost of mass redistribution. Its Kantorovich formulation yields the Wasserstein distance $\mathcal{W}_p$~\cite{villani2008optimal}. In the semi-discrete setting, continuous and discrete measures are compared and, under mild assumptions, strong duality holds. The optimal map partitions domains into Laguerre cells, which generalize Voronoi cells through additive weights~\cite{aurenhammer1987power}.

Consider a discrete probability measure $\mu := \frac{1}{N}\sum_{i=1}^N \delta_{\mathbf{p}_i}$ with $\mathbf{p}_i\in\mathbb R^d$, and an absolutely continuous measure $\bar\mu \in \mathcal{P}_2(\Omega)$ with density $\bar\rho(\mathbf{x})$ supported in $\Omega \subseteq \mathbb R^d$. For the squared Euclidean cost, the Kantorovich dual formulation~\cite{villani2008optimal,peyre2019computational} reads
\begin{equation*}
\mathcal{W}_2^2(\mu,\bar\mu) = \max_{\boldsymbol{\phi}\in\mathbb{R}^N} F(\mathbf{p},\boldsymbol{\phi}),
\end{equation*}
where $\boldsymbol{\phi}:=\mathrm{vec}(\phi_1,\dots,\phi_N)$ is the dual variable and
\begin{multline}\label{eq:dual_objective}
F(\mathbf{p},\boldsymbol{\phi}) := \sum_{i=1}^N \Bigg[
\int_{\mathcal{L}_i(\mathbf{p},\boldsymbol{\phi})}
\frac{1}{2}\|\mathbf{p}_i-\mathbf{x}\|_2^2\,\bar\rho(\mathbf{x})\,\mathrm{d}\mathbf{x} \\
+ \left(\frac{1}{N} - \int_{\mathcal{L}_i(\mathbf{p},\boldsymbol{\phi})}
\bar\rho(\mathbf{x})\,\mathrm{d}\mathbf{x}\right)\phi_i \Bigg],
\end{multline}
with the $i$-th Laguerre cell (generalized Voronoi)
\begin{multline}\label{eq:laguerre}
\mathcal{L}_i(\mathbf{p},\boldsymbol{\phi}) := \Big\{\mathbf{x}\in\Omega :
\tfrac{1}{2}\|\mathbf{p}_i-\mathbf{x}\|_2^2 - \phi_i \\
\le \tfrac{1}{2}\|\mathbf{p}_j-\mathbf{x}\|_2^2 - \phi_j, \ \forall j \neq i \Big\}.
\end{multline}
\subsection{Entropic Regularization}
Entropy is commonly used as a regularizer in optimization problems over probability measures. In OT, entropic regularization is particularly convenient as it smooths the transport problem and typically reduces the computational burden associated with classical OT \cite{cuturi2013sinkhorn,peyre2019computational,tacskesen2023semi}. In the semi-discrete setting, the regularized problem becomes
\begin{align*}
    \mathcal{W}_{2,\varepsilon}^2(\mu,\bar \mu) = \min_{\boldsymbol \Pi} \sum_{i=1}^N \int_{\Omega} \frac{1}{2} \lVert \mathbf{p}_i - \mathbf{x} \rVert_2^2 \Pi_i(\mathbf x) \bar \rho(\mathbf{x}) \mathrm d\mathbf x \\+ \varepsilon D_\mathrm{KL}(\gamma_{\boldsymbol \Pi} \parallel \bar \mu \otimes \mu) = \max_{\boldsymbol \phi \in \mathbb{R}^N} F^\varepsilon(\mathbf{p}, \boldsymbol \phi), 
\end{align*}
where $D_\mathrm{KL}(\nu_1 \parallel \nu_2)$ is the Kullback-Leibler divergence between measures $\nu_1$ and $\nu_2$, $\boldsymbol \Pi := \mathrm{vec}(\Pi_1(\mathbf{x}), \dots, \Pi_N(\mathbf{x}))$ is the vector of soft Laguerre assignments and $\gamma_{\boldsymbol \Pi}(\mathrm d \mathbf x, i):=\Pi_i(\mathbf x)\bar\rho(\mathbf x)\mathrm d \mathbf x$ is the coupling they induce. The assignments satisfy
\begin{align*}
    \sum_{i=1}^N \Pi_i (\mathbf{x})=1, \; \forall \mathbf x \in \Omega, \\ \int_{\Omega} \Pi_i(\mathbf x) \bar \rho(\mathbf x) \mathrm d\mathbf x= \frac{1}{N}, \; \forall i \in \{1,\dots,N\}.
\end{align*}
The regularized dual objective is
\begin{multline} \label{eq:dual_objective_eps}
F^\varepsilon(\mathbf{p},\boldsymbol\phi):= \frac{1}{N}\sum_{i=1}^N\phi_i \\ -\varepsilon \int_{\Omega} \log\left( \frac{1}{N}\sum_{j=1}^N e^{(\phi_j-c_j(\mathbf{x}))/\varepsilon} \right) \bar\rho(\mathbf{x})\,\mathrm{d}\mathbf{x},
\end{multline}
where the cost of the discrete atom $i$ is $c_i=\tfrac{1}{2} \lVert \mathbf x - \mathbf p_i \rVert_2^2$.
The hard Laguerre assignment is thereby replaced by a softmax distribution: each $\mathbf{x}\in\Omega$ is fractionally assigned to all discrete atoms of $\mu$ rather than belonging to a single Laguerre cell. The $\varepsilon$-regularized assignment associated with atom $i$ is
\begin{equation}
\Pi_i(\mathbf{x}) := \frac{ e^{(\phi_i-c_i(\mathbf{x}))/\varepsilon} }{ \sum_{j=1}^N e^{(\phi_j-c_j(\mathbf{x}))/\varepsilon}}.
\label{eq:soft_laguerre}
\end{equation}
As $\varepsilon\to 0$, the hard Laguerre partition is recovered whenever the limiting Laguerre assignment is unique almost everywhere and its cells have positive mass \cite{peyre2019computational}.
\subsection{Coverage Control}
Coverage control positions mobile agents according to a target density by minimizing a coverage cost~\cite{cortes2004coverage}, classically via Voronoi tessellations~\cite{cortes2004coverage,lee2015multirobot} and, more recently, via semi-discrete OT~\cite{kia2024multi,krishnan2022multiscale,krishnan2024distributed,inoue2020optimal}. Unlike Voronoi partitions, which depend only on agent positions, Laguerre partitions incorporate the dual variables $\boldsymbol{\phi}$ to enforce equal-mass constraints, so that agents concentrate in high-density areas and the finite-agent approximation error is reduced~\cite{napolitano2026optimal,inoue2020optimal}. Building on~\cite{inoue2020optimal}, our previous work~\cite{napolitano2026optimal} tracked the unregularized first-order optimality conditions $\mathbf{p}_i=\mathbf{b}_i$ and $a_i=1/N$, for a time-varying $\bar\rho$, of the optimization problem minimizing the semi-discrete 2-Wasserstein distance squared. That construction, however, requires sensitivities of integrals over moving Laguerre cells: these contain facet integrals and change expression whenever the Laguerre adjacency graph changes, so the analysis needs nondegenerate cells, moving-boundary formulas, and a nonsmooth solution concept, which the entropic framework proposed here avoids.

\section{Modeling and Problem Statement}
\label{sec:problem}
Building upon \cite{napolitano2026optimal}, our aim is to control an ensemble of finite $N$ agents to approximate a continuous, possibly time-varying distribution by minimizing the instantaneous entropic semi-discrete transport cost.

Consider a convex compact domain $\Omega \subseteq \mathbb R^d$, and let $\bar \rho:\Omega \times \mathbb{R}_{\geq0} \to \mathbb{R}_{>0}$ denote a (possibly time-varying) target density function with unitary mass. 
The absolutely continuous target measure $\bar \mu_t \in \mathcal{P}_2(\Omega)$ is induced by $\bar \rho(\mathbf{x},t)$.
\begin{assumption} \label{ass:assumption_ref}
    $\bar \rho$ is $C^1$ and Lipschitz in time.
\end{assumption}
This guarantees that $\bar\rho$ is sufficiently regular and $\partial_t\bar\rho$ exists.
The $i$-th controlled agent is located at position $\mathbf{p}_i(t) \in \mathbb R^d$ at time $t$, and we collect the positions of all $N$ agents in $\mathbf{p} := \mathrm{vec}(\mathbf{p}_1, \dots, \mathbf{p}_N)$.
Each agent evolves according to:
\begin{equation}
\label{eq:agent_model}
\dot{\mathbf{p}}_{i} = \mathbf{u}_i(t), \quad \text{for } i = 1, \dots, N,
\end{equation}
where $\mathbf{u}_i(t) \in \mathbb{R}^d$ is the control input driving the $i$-th agent.
The empirical measure of the agents at time $t$, denoted by $\mu_t$, is defined as the normalized sum of Dirac delta measures centered at the agent positions:
\begin{equation}
\label{eq:empirical_density}
\mu_t := \frac{1}{N} \sum_{i=1}^N \delta_{\mathbf{p}_i}.
\end{equation}

Assuming that the agents have access to $\bar \rho$ and its time derivative $\partial_t \bar \rho$, the objective is to design a control law that computes $\mathbf{u}_i(t)$ for $i = 1, \dots, N$, so that, at each time $t$, the positions of the agents minimize the mismatch between the target measure $\bar\mu_t$ and the empirical measure~\eqref{eq:empirical_density}, i.e., 
\begin{equation}
\label{eq:problem}
\min_{\mathbf{p}} \mathcal{W}^2_{2,\varepsilon}(\mu_t, \bar{\mu}_t)= \min_{\mathbf{p}} \max_{\boldsymbol \phi} F^\varepsilon(\mathbf{p}(t), \boldsymbol \phi(t),t),
\end{equation}
for the entropy regularization parameter $\varepsilon>0$, with $F^\varepsilon$ given by~\eqref{eq:dual_objective_eps}. Note that the \emph{instantaneous} Wasserstein minimization is fundamentally different from dynamic OT in the Benamou-Brenier formulation~\cite{benamou2000computational}.

\section{Entropic Coverage Control} \label{sec:solution}
For every $(\mathbf p,\boldsymbol\phi)$ and every $\varepsilon>0$, the soft assignment~\eqref{eq:soft_laguerre} satisfies $\Pi_i(\mathbf x,\mathbf p(t), \boldsymbol \phi(t))>0$. We define the regularized masses and barycenters as
\begin{subequations} \label{eq:soft_mass_barycenter}
\begin{align}
    a_i^\varepsilon (t) &:= \int_{\Omega} \Pi_i(\mathbf x,\mathbf p(t), \boldsymbol \phi (t)) \bar\rho(\mathbf x, t)\mathrm d \mathbf x, \label{eq:soft_mass}\\
    \mathbf b_i^\varepsilon (t) &:=\frac{1}{a_i^\varepsilon}
    \int_{\Omega} \mathbf x\Pi_i(\mathbf x,\mathbf p(t), \boldsymbol \phi (t))
    \bar\rho(\mathbf x,t)\mathrm d \mathbf x. \label{eq:soft_barycenter}
\end{align}
\end{subequations}
Thus, $a_i^\varepsilon>0$ and every $\mathbf b_i^\varepsilon$ is well defined, without requiring an additional nonempty-cell assumption. Here $a_i^\varepsilon \in \mathbb{R}_{>0}$ and $\mathbf b_i^\varepsilon \in \Omega$, and we collect them as $\mathbf a^\varepsilon := \mathrm{vec}(a_1^\varepsilon,\dots,a_N^\varepsilon) \in \mathbb R^N_{>0}$ and $\mathbf b^\varepsilon := \mathrm{vec}(\mathbf b_1^\varepsilon,\dots,\mathbf b_N^\varepsilon) \in \Omega^N$.
These converge to their unregularized counterparts, the mass and barycenter of the Laguerre cell~\eqref{eq:laguerre}, as $\varepsilon \to 0$~\cite{peyre2019computational}.
Direct differentiation yields
\begin{equation}
    \nabla_{\phi_i}F^\varepsilon=\frac1N-a_i^\varepsilon,
    \quad
    \nabla_{\mathbf p_i}F^\varepsilon
    =a_i^\varepsilon
    (\mathbf p_i-\mathbf b_i^\varepsilon).
    \label{eq:entropic_gradients}
\end{equation}
Setting~\eqref{eq:entropic_gradients} to zero, the regularized first-order conditions are
\begin{subequations} \label{eq:saddle_new}
\begin{align}
    \mathbf p^\star(t) & = \mathbf{b}^\varepsilon(t), \label{eq:min_eps} \\
    \boldsymbol{\phi}^\star(t) & \in \{\boldsymbol{\phi}(t)\mid \mathbf{a}^\varepsilon = \mathbf{1}_N/N\}. \label{eq:max_eps}
\end{align}
\end{subequations}
We will then design dynamics for $(\mathbf p, \boldsymbol \phi)$ to achieve the above conditions. To do so, we first differentiate~\eqref{eq:soft_mass_barycenter} in time, obtaining
\begin{align*}
    \dot{{a}}^\varepsilon_i &= \int_{\Omega} \Pi_i(\mathbf{x}) \partial_t \bar \rho(\mathbf{x},t) \mathrm d\mathbf x + \int_{\Omega} \dot{\Pi}_i(\mathbf{x}) \bar \rho(\mathbf{x},t) \mathrm d\mathbf x, \\
    \dot{\mathbf{b}}^\varepsilon_i &= \frac{1}{a_i^\varepsilon} \int_{\Omega} (\mathbf x-\mathbf{b}_i^\varepsilon)(\Pi_i(\mathbf{x}) \partial_t \bar \rho(\mathbf{x},t) + \dot{\Pi}_i(\mathbf{x}) \bar \rho(\mathbf{x},t) )\mathrm d\mathbf x,
\end{align*}
where $\partial_t \bar \rho$ is given, while 
\begin{align*}
    \dot{\Pi}_i(\mathbf x,\mathbf{p}(t), \boldsymbol \phi(t))= \sum_{j=1}^N \frac{\partial \Pi_i}{\partial \phi_j} \dot \phi_j + \sum_{j=1}^N \frac{\partial \Pi_i}{\partial \mathbf{p}_j} \dot{\mathbf{p}}_j.
\end{align*}
Exploiting the softmax properties, we have
\begin{align*}
    \frac{\partial \Pi_i}{\partial \phi_j} &= \frac{1}{\varepsilon} \Pi_i (\delta_{ij} -\Pi_j), \\
    \frac{\partial \Pi_i}{\partial \mathbf{p}_j} &= \frac{1}{\varepsilon} \Pi_i (\delta_{ij} -\Pi_j) (\mathbf x - \mathbf p_j)^\top.
\end{align*}

Therefore, we define the sensitivity matrix $\mathbf{S}_{pb}^\varepsilon \in \mathbb{R}^{dN \times dN}$, with $d \times d$ block entries
\begin{subequations} \label{eq:sensitivity_matrices}
\begin{align}
    (\mathbf{S}_{pb}^\varepsilon)_{ij} = \frac{1}{\varepsilon a_i^\varepsilon} \int_{\Omega} (\mathbf x- \mathbf b_i^\varepsilon) \Pi_i (\delta_{ij}-\Pi_j)(\mathbf x - \mathbf p_j)^\top  \bar \rho \mathrm d \mathbf x,
\end{align}
the matrix $\mathbf{S}_{\phi b}^\varepsilon \in \mathbb{R}^{dN \times N}$ with $d \times 1$ block entries
\begin{align}
    (\mathbf{S}_{\phi b}^\varepsilon)_{ij} = \frac{1}{\varepsilon a_i^\varepsilon} \int_{\Omega} (\mathbf x- \mathbf b_i^\varepsilon) \Pi_i (\delta_{ij}-\Pi_j)  \bar \rho \mathrm d \mathbf x,
\end{align}
the matrix $\mathbf{S}_{pa}^\varepsilon \in \mathbb{R}^{N \times dN}$ with $1 \times d$ block entries
\begin{align}
    (\mathbf{S}_{pa}^\varepsilon)_{ij} = \frac{1}{\varepsilon} \int_{\Omega} \Pi_i (\delta_{ij}-\Pi_j)(\mathbf x - \mathbf p_j)^\top  \bar \rho \mathrm d \mathbf x,
\end{align}
and the matrix $\mathbf{S}_{\phi a}^\varepsilon \in \mathbb{R}^{N \times N}$ with scalar entries
\begin{align}
    (\mathbf{S}_{\phi a}^\varepsilon)_{ij} = \frac{1}{\varepsilon} \int_{\Omega} \Pi_i (\delta_{ij}-\Pi_j) \bar \rho \mathrm d \mathbf x.
\end{align}
\end{subequations}
This construction allows us to compactly write
\begin{align*}
    \dot{\mathbf{a}}^\varepsilon & = \partial_t \mathbf{a}^\varepsilon + \mathbf{S}_{pa}^\varepsilon \dot{\mathbf{p}} + \mathbf{S}_{\phi a}^\varepsilon \dot{\boldsymbol \phi}, \\
    \dot{\mathbf{b}}^\varepsilon & = \partial_t \mathbf{b}^\varepsilon + \mathbf{S}_{pb}^\varepsilon \dot{\mathbf{p}} + \mathbf{S}_{\phi b}^\varepsilon \dot{\boldsymbol \phi}, 
\end{align*}
with 
\begin{subequations} \label{eq:FF_terms}
    \begin{align} 
    \partial_t {{a}}^\varepsilon_i &= \int_{\Omega} \Pi_i(\mathbf{x}) \partial_t \bar \rho(\mathbf{x},t) \mathrm d\mathbf x, \\
    \partial_t {\mathbf{b}}^\varepsilon_i &= \frac{1}{a_i^\varepsilon} \int_{\Omega} (\mathbf x-\mathbf{b}_i^\varepsilon)\Pi_i(\mathbf{x}) \partial_t \bar \rho(\mathbf{x},t) \mathrm d\mathbf x.
\end{align}
\end{subequations}

We now exploit this structure to design primal and dual dynamics that track the time-varying first-order optimality conditions~\eqref{eq:saddle_new} by directly imposing the desired error dynamics. Since the dual variables $\boldsymbol\phi$ are defined up to an additive constant, we remove one decision variable to resolve this ambiguity and, without loss of generality, set $\phi_N=0$. The inner objective is concave and becomes strictly concave once this gauge is fixed; the outer minimization, however, is nonconvex, as is standard for this class of problems~\cite{inoue2020optimal,napolitano2026optimal,cortes2004coverage}. We therefore target stationary points.

\begin{theorem}[Tracking solution] \label{thm:tracking}
Suppose Assumption \ref{ass:assumption_ref} hold. Let $\mathbf{E}_N:= \begin{bmatrix}
\mathbf{I}_{N-1} & \mathbf{0}_{N-1}
\end{bmatrix}^\top$,
impose $\boldsymbol\phi=\mathbf{E}_N\widehat{\boldsymbol\phi}$ with $\phi_N=0$, and define
\begin{equation}
\label{eq:M}
\mathbf{M}^\varepsilon :=
\begin{bmatrix}
\mathbf{I}_{dN}-\mathbf{S}_{pb}^\varepsilon & -\mathbf{S}_{\phi b}^\varepsilon \mathbf{E}_N \\
\mathbf{E}_N^\top \mathbf{S}_{pa}^\varepsilon & \mathbf{E}_N^\top \mathbf{S}_{\phi a}^\varepsilon \mathbf{E}_N
\end{bmatrix},
\end{equation}
where $\mathbf{S}_{pb}^\varepsilon$, $\mathbf{S}_{\phi b}^\varepsilon$, $\mathbf{S}_{pa}^\varepsilon$, and $\mathbf{S}_{\phi a}^\varepsilon$ are defined in~\eqref{eq:sensitivity_matrices}.
Suppose further that $(\mathbf{M}^\varepsilon)^{-1}$ is well defined\footnote{This assumption is typically made in the related literature on time-varying coverage control, where analogous invertibility conditions arise and are discussed, e.g., in~\cite{lee2015multirobot}}.
Then, the control inputs in~\eqref{eq:agent_model} defined by the continuous-time dynamics
\begin{equation}
\label{eq:tracking_dynamics}
\begin{bmatrix}
\dot{\mathbf p}\\
\dot{\widehat{\boldsymbol\phi}}
\end{bmatrix}
=
(\mathbf{M}^\varepsilon)^{-1}
\begin{bmatrix}
-K_x(\mathbf p-\mathbf b^\varepsilon)+\partial_t\mathbf b^\varepsilon \\
- \mathbf{E}_N^\top \left( K_\phi \left( \mathbf a^\varepsilon - \frac1N\mathbf1_N \right) +\partial_t\mathbf a^\varepsilon \right)
\end{bmatrix},
\end{equation}
with gains $K_x,K_\phi>0$ and feedforward terms $\partial_t\mathbf{a}^\varepsilon$, $\partial_t\mathbf{b}^\varepsilon$ given by~\eqref{eq:FF_terms}, achieve exponential tracking of the first-order optimality conditions~\eqref{eq:saddle_new} of Problem~\eqref{eq:problem}. In particular, the tracking errors $\mathbf{e}_p:=\mathbf{p}-\mathbf{b}^\varepsilon$ and $\mathbf{e}_a:=\mathbf{a}^\varepsilon-\frac{1}{N}\mathbf{1}_N$ converge exponentially to zero with rates $K_x$ and $K_\phi$, respectively.
\end{theorem}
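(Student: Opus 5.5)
The plan is a direct error-dynamics computation. We differentiate $\mathbf e_p$ and $\mathbf e_a$ along trajectories, substitute the chain-rule expressions for $\dot{\mathbf a}^\varepsilon$ and $\dot{\mathbf b}^\varepsilon$ derived above, and show that the control law~\eqref{eq:tracking_dynamics} makes them decoupled linear ODEs with rates $K_x$ and $K_\phi$.

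\emph{Well-posedness.} Under Assumption~\ref{ass:assumption_ref}, the softmax weights~\eqref{eq:soft_laguerre} are smooth in $(\mathbf p,\boldsymbol\phi)$ and strictly positive, and $\Omega$ is compact. Differentiation under the integral sign is therefore justified, so $\mathbf a^\varepsilon$ and $\mathbf b^\varepsilon$ are $C^1$ and the identities
\[
\dot{\mathbf a}^\varepsilon=\partial_t\mathbf a^\varepsilon+\mathbf S^\varepsilon_{pa}\dot{\mathbf p}+\mathbf S^\varepsilon_{\phi a}\dot{\boldsymbol\phi},\qquad
\dot{\mathbf b}^\varepsilon=\partial_t\mathbf b^\varepsilon+\mathbf S^\varepsilon_{pb}\dot{\mathbf p}+\mathbf S^\varepsilon_{\phi b}\dot{\boldsymbol\phi}
\]
hold. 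Since $(\mathbf M^\varepsilon)^{-1}$ is assumed to exist along the trajectory, the right-hand side of~\eqref{eq:tracking_dynamics} is locally Lipschitz. This gives local existence and uniqueness of solutions.

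\emph{Primal error.} With $\dot{\boldsymbol\phi}=\mathbf E_N\dot{\widehat{\boldsymbol\phi}}$,
\[
\dot{\mathbf e}_p=(\mathbf I_{dN}-\mathbf S^\varepsilon_{pb})\dot{\mathbf p}-\mathbf S^\varepsilon_{\phi b}\mathbf E_N\dot{\widehat{\boldsymbol\phi}}-\partial_t\mathbf b^\varepsilon .
\]
The first two terms are exactly the first block row of $\mathbf M^\varepsilon[\dot{\mathbf p};\dot{\widehat{\boldsymbol\phi}}]$. By~\eqref{eq:tracking_dynamics} that block row equals $-K_x\mathbf e_p+\partial_t\mathbf b^\varepsilon$. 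Hence $\dot{\mathbf e}_p=-K_x\mathbf e_p$ and $\mathbf e_p(t)=e^{-K_x t}\mathbf e_p(0)$.

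\emph{Dual error, reduced components.} The second block row gives
\[
\mathbf E_N^\top(\mathbf S^\varepsilon_{pa}\dot{\mathbf p}+\mathbf S^\varepsilon_{\phi a}\mathbf E_N\dot{\widehat{\boldsymbol\phi}})=-\mathbf E_N^\top(K_\phi\mathbf e_a+\partial_t\mathbf a^\varepsilon).
\]
Therefore $\mathbf E_N^\top\dot{\mathbf e}_a=\mathbf E_N^\top\dot{\mathbf a}^\varepsilon=-K_\phi\mathbf E_N^\top\mathbf e_a$. So the first $N-1$ components of $\mathbf e_a$ decay with rate $K_\phi$.

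\emph{Dual error, last component.} This is the step needing care, and I would handle it with mass conservation. Since $\sum_i\Pi_i\equiv1$ and $\int_\Omega\bar\rho\,\mathrm d\mathbf x=1$ for all $t$, we get $\mathbf 1_N^\top\mathbf a^\varepsilon=1$ and hence $\mathbf 1_N^\top\mathbf e_a=0$. Consequently $(\mathbf e_a)_N=-\sum_{i<N}(\mathbf e_a)_i$, which decays with the same rate $K_\phi$. This also explains why the gauge-fixed system, with the row removed by $\mathbf E_N^\top$, loses no information.

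\emph{Obstacle.} The main difficulty is not the algebra but global validity. Exponential convergence holds on any interval where $\mathbf M^\varepsilon$ stays invertible, with $\|(\mathbf M^\varepsilon)^{-1}\|$ bounded so that solutions cannot escape in finite time. I would state the result under the standing invertibility hypothesis along the trajectory, as the footnote in the statement suggests. I would then note that $\mathbf b^\varepsilon\in\Omega$ and the exponential decay of $\mathbf e_p$ keep $\mathbf p$ bounded. Boundedness of $\widehat{\boldsymbol\phi}$ would follow from $\mathbf a^\varepsilon$ staying near $\mathbf 1_N/N$ together with strict concavity in the gauge-fixed dual. Hence solutions extend for all $t\ge0$, and the rates $K_x$ and $K_\phi$ follow.
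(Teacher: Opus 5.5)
Your proposal is correct and follows the paper's proof essentially step for step. It subtracts the chain-rule expression for $\dot{\mathbf b}^\varepsilon$ from the first block row to get $\dot{\mathbf e}_p=-K_x\mathbf e_p$, uses the second block row for the first $N-1$ mass errors, recovers the $N$-th component from $\mathbf 1_N^\top\mathbf e_a=0$, and infers boundedness of the positions from $\mathbf b^\varepsilon\in\Omega$. Your extra remarks on local existence and on boundedness of $\widehat{\boldsymbol\phi}$ go beyond the paper, which argues forward completeness only through the positions. That boundedness is obtained most cleanly from $\mathbf a^\varepsilon(t)$ staying on the segment between $\mathbf a^\varepsilon(0)$ and $\mathbf 1_N/N$, so that every $a_i^\varepsilon$ stays bounded away from zero. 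Strict concavity alone would not give it.
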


\begin{proof}
    The first block row of~\eqref{eq:tracking_dynamics} gives
    \begin{equation}
    (\mathbf{I}_{dN}-\mathbf{S}_{pb}^\varepsilon)\dot{\mathbf p} - \mathbf{S}_{\phi b}^\varepsilon \mathbf{E}_N\dot{\widehat{\boldsymbol\phi}} = -K_x(\mathbf p-\mathbf b^\varepsilon)+\partial_t\mathbf b^\varepsilon.
\label{pdot}
    \end{equation}
    Since
    \begin{equation}
    \dot{\mathbf b}^\varepsilon = \mathbf{S}_{pb}^\varepsilon \dot{\mathbf p} + \mathbf{S}_{\phi b}^\varepsilon \mathbf{E}_N\dot{\widehat{\boldsymbol\phi}} + \partial_t\mathbf b^\varepsilon,
    \label{bdot}\end{equation}
    subtracting \eqref{bdot} from \eqref{pdot}, we obtain $\dot{\mathbf e}_p = -K_x \mathbf{e}_p$, leading to exponential convergence of the agent positions to \eqref{eq:min_eps} with rate $K_x$.
    The second block row gives
    \begin{align*}
    \mathbf{E}_N^\top \left( \mathbf{S}_{pa}^\varepsilon \dot{\mathbf p} +  \mathbf{S}_{\phi a}^\varepsilon \mathbf{E}_N \dot{\widehat{\boldsymbol\phi}} \right) &= -\mathbf{E}_N^\top \left(K_\phi \mathbf e_a + \partial_t\mathbf a^\varepsilon \right).
    \end{align*}
    Using
    \begin{equation*}
    \dot{\mathbf a}^\varepsilon = \mathbf{S}_{pa}^\varepsilon \dot{\mathbf p} + \mathbf{S}_{\phi a}^\varepsilon \mathbf{E}_N\dot{\widehat{\boldsymbol\phi}} + \partial_t\mathbf a^\varepsilon,
    \end{equation*}
    we obtain
    \begin{equation*}
    \mathbf{E}_N^\top \dot{\mathbf e}_a = -K_\phi \mathbf{E}_N^\top \mathbf e_a.
    \end{equation*}
    Thus, the first $N-1$ components of $\mathbf e_a$ satisfy the desired exponential dynamics. Moreover, since the soft assignments sum to one, $\mathbf1_N^\top\mathbf a^\varepsilon=1$, so that
    \begin{equation*}
    \mathbf1_N^\top\mathbf e_a=0,
    \quad
    \mathbf1_N^\top\dot{\mathbf e}_a=0,
    \end{equation*}
    and the $N$-th component satisfies the same equation. Hence $ \dot{\mathbf e}_a=-K_\phi\mathbf e_a$, leading to exponential convergence of the mass error to zero, thus leading the dual variable to satisfy \eqref{eq:max_eps}.
    Finally, since $\mathbf p=\mathbf b^\varepsilon+\mathbf e_p$ and each $\mathbf b_i^\varepsilon$ is a $\bar\rho$-weighted average of $\mathbf x \in \Omega$, the positions remain bounded, so that the closed loop is forward complete, i.e., no finite escape occurs and its solutions exist for all $t\geq0$.
\end{proof}

The main theoretical advantage over the unregularized formulation of~\cite[Theorem~1]{napolitano2026optimal} is the removal of several geometric regularity assumptions. Since $\Pi_i>0$ for every $\varepsilon>0$, the soft masses, barycenters, and their sensitivities are continuously differentiable at every primal-dual configuration: distinct agents, nondegenerate cells, and locally finite topological transitions of the Laguerre tessellation need no longer be assumed, the closed loop admits classical solutions, and the identities above hold at every time instant rather than almost everywhere. The implementation simplifies accordingly, as all sensitivities are smooth integrals over $\Omega$, evaluated by vectorizable quadrature and softmax operations, with no Laguerre cells to construct, neighbors to identify, or facet integrals to evaluate. The price is the loss of the topology-induced sparsity that enables distributed approximations in the unregularized setting: $\mathbf{M}^\varepsilon$ is generally dense, although approximate sparsity may be recovered by truncating negligible weights.

We now characterize the role of $\varepsilon$. Section~\ref{sec:numerical} shows that small values yield solutions comparable to their unregularized counterparts, so that the above advantages are obtained at a controlled approximation bias. Larger values make the bias more pronounced, observing reduced agent dispersion and, for sufficiently large $\varepsilon$, collapse of the ensemble onto the global barycenter of the target density. Although coincident positions remain admissible in the regularized formulation, such behavior is undesirable when inter-agent collisions must be avoided.
\begin{proposition}[Sub-optimality gap] \label{prop:subopt_gap}
    Let $\mathcal{W}_2^2(\mathbf{p},t)$ and $\mathcal{W}_{2,\varepsilon}^2(\mathbf{p},t)$ be the unregularized and $\varepsilon$-regularized transport costs between empirical and target measures at time $t$ for the given $N$ agents $\mathbf{p}$. Then, for every $(\mathbf{p},t)$,
    \begin{align*}
        0 \leq \mathcal{W}_{2,\varepsilon}^2(\mathbf{p},t)- \mathcal{W}_{2}^2(\mathbf{p},t) \leq \varepsilon \log N.
    \end{align*}
\end{proposition}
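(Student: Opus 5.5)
The plan is to work directly with the primal formulation of $\mathcal{W}_{2,\varepsilon}^2$ as a minimization over soft assignments $\boldsymbol\Pi$. Both inequalities should then follow from bounds on the Kullback--Leibler term evaluated at suitable feasible couplings. Write $C(\boldsymbol\Pi):=\sum_i\int_\Omega \tfrac12\|\mathbf p_i-\mathbf x\|_2^2\Pi_i(\mathbf x)\bar\rho(\mathbf x,t)\,\mathrm d\mathbf x$ for the linear transport cost. The feasible set of $\boldsymbol\Pi$ (row sums equal to one, masses equal to $1/N$) is exactly the set of couplings $\gamma_{\boldsymbol\Pi}$ between $\bar\mu_t$ and $\mu_t$. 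Thus $\mathcal{W}_2^2(\mathbf p,t)=\min_{\boldsymbol\Pi} C(\boldsymbol\Pi)$ over the same feasible set, and $\mathcal{W}_{2,\varepsilon}^2=\min_{\boldsymbol\Pi}\big[C(\boldsymbol\Pi)+\varepsilon D_{\mathrm{KL}}(\gamma_{\boldsymbol\Pi}\parallel\bar\mu_t\otimes\mu_t)\big]$. Here $\bar\mu_t\otimes\mu_t$ has density $\bar\rho(\mathbf x)\cdot\tfrac1N$ with respect to $\mathrm d\mathbf x\otimes$ counting measure on $\{1,\dots,N\}$.

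For the lower bound, I will observe that $D_{\mathrm{KL}}\ge 0$ for any pair of probability measures, and $\gamma_{\boldsymbol\Pi}$ is a probability measure since its total mass is $\int\sum_i\Pi_i\bar\rho=1$. Hence for every feasible $\boldsymbol\Pi$, $C(\boldsymbol\Pi)+\varepsilon D_{\mathrm{KL}}\ge C(\boldsymbol\Pi)\ge\mathcal W_2^2$. Taking the infimum gives $\mathcal W_{2,\varepsilon}^2\ge\mathcal W_2^2$.

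For the upper bound, I will compute the KL term explicitly: the Radon--Nikodym derivative of $\gamma_{\boldsymbol\Pi}$ with respect to $\bar\mu_t\otimes\mu_t$ at $(\mathbf x,i)$ is $N\Pi_i(\mathbf x)$. Therefore
\begin{equation*}
D_{\mathrm{KL}}(\gamma_{\boldsymbol\Pi}\parallel\bar\mu_t\otimes\mu_t)=\int_\Omega\sum_{i=1}^N\Pi_i(\mathbf x)\log\big(N\Pi_i(\mathbf x)\big)\bar\rho(\mathbf x,t)\,\mathrm d\mathbf x=\log N-\int_\Omega H(\boldsymbol\Pi(\mathbf x))\bar\rho\,\mathrm d\mathbf x ,
\end{equation*}
where $H(\boldsymbol\Pi(\mathbf x))=-\sum_i\Pi_i\log\Pi_i\ge0$ is the Shannon entropy of the probability vector $\boldsymbol\Pi(\mathbf x)$, with the convention $0\log0=0$. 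Hence $D_{\mathrm{KL}}\le\log N$ for every feasible $\boldsymbol\Pi$, including hard (0/1) assignments, for which $H=0$. I will then take $\boldsymbol\Pi^\star$ to be an optimal unregularized coupling. In the semi-discrete setting with absolutely continuous $\bar\mu_t$, this is the indicator of the Laguerre cells at the optimal dual variables, and existence follows from the Kantorovich duality recalled in the Preliminaries; more generally, any optimal coupling works, since the bound $D_{\mathrm{KL}}\le\log N$ used only $H\ge0$. Plugging it in gives $\mathcal W_{2,\varepsilon}^2\le C(\boldsymbol\Pi^\star)+\varepsilon\log N=\mathcal W_2^2+\varepsilon\log N$.

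The main obstacle is bookkeeping rather than depth. First, I must check that the unregularized minimum is attained within the same feasible class of soft assignments, which holds because the feasible couplings form a weakly compact set and $C$ is continuous on the bounded domain $\Omega$. Second, I must check that the KL term is finite and correctly normalized against the reference measure $\bar\mu_t\otimes\mu_t$, with weight $1/N$ on each atom; this normalization is precisely where the $\log N$ arises. Finally, I will remark that both bounds hold pointwise in $(\mathbf p,t)$, since no property of $\mathbf p$ was used beyond $\mathbf p_i\in\mathbb R^d$. Coincident agents are harmless: treating atoms as labeled indices $i$ keeps the computation valid.
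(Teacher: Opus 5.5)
Your proof is correct and follows the paper's argument: the lower bound comes from $D_{\mathrm{KL}}\ge 0$, and the upper bound from inserting an optimal unregularized coupling into the regularized primal objective. The only difference is minor. The paper takes this coupling to be deterministic ($\Pi_i$ an indicator of a set of mass $1/N$, which needs a cell-splitting argument when agents coincide) and evaluates its divergence as exactly $\log N$; your identity $D_{\mathrm{KL}}=\log N-\int_\Omega H(\boldsymbol\Pi)\bar\rho\,\mathrm d\mathbf x\le\log N$ holds for every feasible coupling, so any optimal coupling works and that structural claim is not needed.
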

\begin{proof}
Since the KL divergence is nonnegative, the value of the regularized problem is no smaller than that of the unregularized one. Conversely, an optimal coupling of the unregularized problem can be chosen deterministic, with \(\Pi_i=\mathds 1_{A_i}\) for measurable sets \(A_i\) satisfying \(\int_{A_i}\bar\rho\mathrm d\mathbf x=1/N\), also when some agents coincide. This coupling is feasible for the regularized problem and $D_{\mathrm{KL}}(\gamma_{\boldsymbol\Pi}\parallel\bar\mu\otimes\mu)=\log N$. Hence,
\begin{align*}
    \mathcal{W}_2^2 \leq \mathcal{W}_{2,\varepsilon}^2 \leq \mathcal{W}_2^2 + \varepsilon \log N.
\end{align*}
\end{proof}
The bound holds uniformly in $\mathbf{p}$ and $t$.
\begin{figure*}[htb]
\centering
    \includegraphics[scale=0.95]{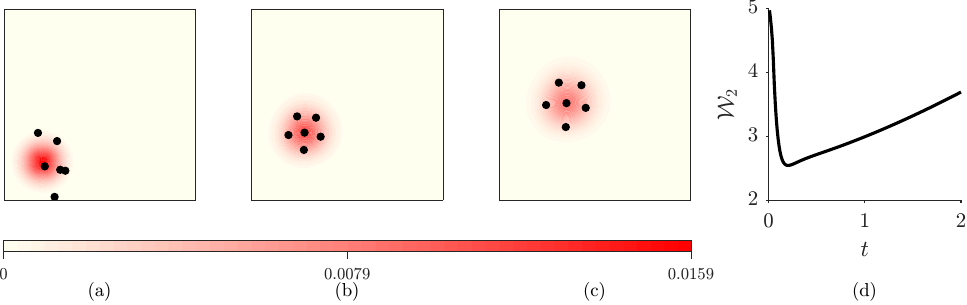}
    \caption{Two-dimensional representative example. Validation of the control law with $6$ agents (black dots) covering a time-varying density (background colour from yellow to red, detailed in the colorbar) at (a) $t=0$, (b) $t=1$, and (c) $t=2$. (d) Time evolution of the 2-Wasserstein distance between the empirical and target measure.}
    \label{fig:exp2D}
\end{figure*}

\begin{proposition}[Collapsed stationary configuration]  \label{prop:collapsed}
    Define the moments of the target distribution
    \begin{align*}
        \bar{\mathbf x} &:= \int_{\Omega} \mathbf{x} \bar \rho (\mathbf x,t) \mathrm d \mathbf x, \\
        \boldsymbol{\Sigma} & := \int_{\Omega} (\mathbf{x}-\bar{\mathbf x}) (\mathbf{x}-\bar{\mathbf x})^\top  \bar \rho (\mathbf x,t) \mathrm d \mathbf x.
    \end{align*}
    For every $\varepsilon>0$, the configuration 
    \begin{align} \label{eq:collapsed_point}
        \mathbf{p}_1=\dots=\mathbf p_N=\bar{\mathbf x}, \quad \boldsymbol \phi = \mathbf{0}_N
    \end{align}
    is a stationary point with Hessian
    \begin{align}\label{eq:hessian_collapsed}
        \nabla_{\mathbf{pp}}^2 \mathcal{W}_{2,\varepsilon}^2 = \frac{1}{N} \left[\mathbf{I}_{dN} - \frac{1}{\varepsilon} (\mathbf{P}_N \otimes \boldsymbol{\Sigma})\right],
    \end{align}
    where
    \begin{align}
        \mathbf{P}_N := \mathbf{I}_N - \frac{1}{N} \mathbf{1}_N \mathbf{1}_N^\top. \label{eq:PN}
    \end{align}
    Therefore, the collapsed point \eqref{eq:collapsed_point} is a strict local minimum if $\varepsilon>\lambda_\mathrm{max}(\boldsymbol{\Sigma})$, and a saddle point if $\varepsilon<\lambda_\mathrm{max}(\boldsymbol{\Sigma})$. 
\end{proposition}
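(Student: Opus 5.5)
The plan is to treat $\mathcal W_{2,\varepsilon}^2(\mathbf p,t)=\max_{\boldsymbol\phi}F^\varepsilon(\mathbf p,\boldsymbol\phi)$ as a value function. I would differentiate it via the envelope theorem and the implicit function theorem, using the explicit softmax derivatives already computed before Theorem~\ref{thm:tracking}.

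\emph{Stationarity.} At $\mathbf p_1=\dots=\mathbf p_N=\bar{\mathbf x}$ all costs $c_j(\mathbf x)$ coincide, so for $\boldsymbol\phi=\mathbf 0_N$ the assignments~\eqref{eq:soft_laguerre} reduce to $\Pi_i\equiv 1/N$. This gives $a_i^\varepsilon=1/N$ and $\mathbf b_i^\varepsilon=\bar{\mathbf x}$. By~\eqref{eq:entropic_gradients} both gradients vanish.

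\emph{Differentiability of the value function.} With the gauge $\phi_N=0$, the inner problem is strictly concave. Its reduced Hessian is $-\mathbf E_N^\top\mathbf S_{\phi a}^\varepsilon\mathbf E_N$, where at the collapsed point $\mathbf S_{\phi a}^\varepsilon=\frac{1}{\varepsilon N}\mathbf P_N$. This reduced matrix is nonsingular, so the implicit function theorem yields a smooth maximizer $\boldsymbol\phi^\star(\mathbf p)$ near the collapsed point, with $\boldsymbol\phi^\star=\mathbf 0_N$ there. Hence $\mathcal W^2_{2,\varepsilon}$ is $C^2$ locally and $\nabla_{\mathbf p}\mathcal W^2_{2,\varepsilon}=\nabla_{\mathbf p}F^\varepsilon(\mathbf p,\boldsymbol\phi^\star(\mathbf p))$, which vanishes at the collapsed point. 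This makes it a stationary point of the min–max problem~\eqref{eq:problem}.

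\emph{Hessian via the Schur complement.} The standard formula is
$\nabla^2_{\mathbf{pp}}\mathcal W^2_{2,\varepsilon}=F_{\mathbf{pp}}-F_{\mathbf p\hat\phi}(F_{\hat\phi\hat\phi})^{-1}F_{\hat\phi\mathbf p}$.
The key observation is that the cross term vanishes. Differentiating $\nabla_{\mathbf p_i}F^\varepsilon=\int\Pi_i(\mathbf p_i-\mathbf x)\bar\rho$ in $\phi_j$ gives
$\frac1\varepsilon\int\Pi_i(\delta_{ij}-\Pi_j)(\mathbf p_i-\mathbf x)\bar\rho\,\mathrm d\mathbf x$.
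At the collapsed point this equals $\frac{1}{\varepsilon N}(\delta_{ij}-\frac1N)\int(\bar{\mathbf x}-\mathbf x)\bar\rho=0$, because $\bar{\mathbf x}$ is the mean.

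For the $\mathbf p$-block, differentiating in $\mathbf p_j$ gives
$\delta_{ij}a_i^\varepsilon\mathbf I_d+\frac1\varepsilon\int\Pi_i(\delta_{ij}-\Pi_j)(\mathbf p_i-\mathbf x)(\mathbf x-\mathbf p_j)^\top\bar\rho$.
At the collapsed point this equals $\frac{\delta_{ij}}{N}\mathbf I_d-\frac{1}{\varepsilon N}(\delta_{ij}-\frac1N)\boldsymbol\Sigma$. Assembling the blocks with Kronecker products yields exactly~\eqref{eq:hessian_collapsed}.

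\emph{Classification.} $\mathbf P_N$ is an orthogonal projector with eigenvalue $1$ of multiplicity $N-1$ and eigenvalue $0$ on $\mathbf 1_N$. The spectrum of $\mathbf P_N\otimes\boldsymbol\Sigma$ therefore consists of $\lambda_k(\boldsymbol\Sigma)$, each with multiplicity $N-1$, together with $0$. The Hessian eigenvalues are thus $\frac1N(1-\lambda_k(\boldsymbol\Sigma)/\varepsilon)$ and $\frac1N$.
\begin{itemize}
\item If $\varepsilon>\lambda_{\max}(\boldsymbol\Sigma)$, the Hessian is positive definite, so the collapsed point is a strict local minimum.
\item If $\varepsilon<\lambda_{\max}(\boldsymbol\Sigma)$, there is a negative eigenvalue in a relative-displacement direction $\mathbf v\otimes\mathbf w$, with $\mathbf v\perp\mathbf 1_N$ and $\mathbf w$ the top eigenvector of $\boldsymbol\Sigma$. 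There is also the positive eigenvalue $\frac1N$ along the common translation $\mathbf 1_N\otimes\mathbf w$. The point is therefore a saddle.
\end{itemize}

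The main obstacle I expect is justifying the value-function Hessian formula rigorously given the gauge freedom in $\boldsymbol\phi$. I would handle it by working with $\widehat{\boldsymbol\phi}$ throughout, and by noting that the vanishing cross term makes the precise form of $(F_{\hat\phi\hat\phi})^{-1}$ irrelevant once its invertibility is established.
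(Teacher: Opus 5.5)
Your proposal is correct and follows essentially the paper's route: both reduce the Hessian of the value function to $\frac1N(\mathbf I_{dN}-\mathbf S_{pb}^\varepsilon)$ at the collapsed point, and in both the dual-variable contribution drops out because the $\boldsymbol\phi$-cross-sensitivity vanishes when $\bar{\mathbf x}$ is the mean. Your $F_{\mathbf p\hat\phi}=\mathbf 0$ is exactly the paper's $\mathbf S_{\phi b}^\varepsilon=\mathbf 0_{dN\times N}$, and your implicit-function-theorem argument for the smoothness of $\boldsymbol\phi^\star(\mathbf p)$ makes explicit a step the paper leaves implicit.
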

\begin{proof}
    At the collapsed point \eqref{eq:collapsed_point}, 
    \begin{align*}
        \Pi_i = \frac{1}{N}, \quad a_i^\varepsilon=\frac{1}{N}, \quad \mathbf{b}_i^\varepsilon = \bar{\mathbf{x}},
    \end{align*}
    therefore, it is a stationary point. 
    We then compute the Hessian
    \begin{align*}
        \nabla_\mathbf{p} \mathcal{W}_{2,\varepsilon}^2 = \frac{1}{N}(\mathbf p - \mathbf b^\varepsilon), \quad \nabla_{\mathbf{pp}}^2 \mathcal{W}_{2,\varepsilon}^2 = \frac{1}{N} \left(\mathbf{I}_{dN} - \frac{\mathrm d \mathbf{b}^\varepsilon}{\mathrm d \mathbf{p}}\right).
    \end{align*}
    Substituting \eqref{eq:collapsed_point} into the sensitivity matrices concerning $\mathbf{b}^\varepsilon(\mathbf{p},\boldsymbol{\phi}^\star(\mathbf p))$ yields
    \begin{align*}
        \mathbf{S}_{\phi b}^\varepsilon = \mathbf 0_{dN \times N}, \quad \mathbf{S}_{pb}^\varepsilon = \frac{1}{\varepsilon} (\mathbf{P}_N \otimes \boldsymbol{\Sigma}),
    \end{align*}
    for $\mathbf{P}_N$ defined in \eqref{eq:PN}. Since $\mathbf{S}_{\phi b}^\varepsilon$ vanishes, the total derivative $\mathrm d \mathbf b^\varepsilon/\mathrm d \mathbf p = \mathbf{S}_{pb}^\varepsilon + \mathbf{S}_{\phi b}^\varepsilon \mathbf{E}_N \,\mathrm d\widehat{\boldsymbol\phi}^\star/\mathrm d\mathbf p=\mathbf{S}_{pb}^\varepsilon$, and substitution into the Hessian leads to \eqref{eq:hessian_collapsed}. The matrix $\mathbf{P}_N$ has eigenvalue zero along the consensus direction and eigenvalue one along the $N-1$ disagreement directions. Hence, the Hessian has eigenvalue $1/N$ with multiplicity $d$, along the consensus directions, and eigenvalues
    $\frac1N\left(1-\frac{\lambda_k(\boldsymbol{\Sigma})}{\varepsilon}\right)$,
    each with multiplicity $N-1$, along the disagreement directions associated with the $k$th eigenvalue of $\boldsymbol{\Sigma}$, proving the classification.
\end{proof}

\begin{remark}
    By Proposition~\ref{prop:collapsed}, $\varepsilon$ should be chosen strictly below $\lambda_\mathrm{max}(\boldsymbol{\Sigma}(t))$ for all $t$. The threshold, however, governs only \emph{full} collapse: partial clustering of subsets of agents can occur below it, as in Fig.~\ref{fig:exp1D_eps6}. As in the unregularized case, collision-free trajectories are never guaranteed. Characterizing such partial collapses, and enforcing collision constraints, is left for future work.
\end{remark}

\section{Numerical Validation} \label{sec:numerical}
In all experiments, the integrals in~\eqref{eq:sensitivity_matrices} are evaluated by tensor-product quadrature on a uniform grid of $50$ points per dimension in the 2D experiment and $5000$ points in the 1D experiment, \eqref{eq:tracking_dynamics} is integrated with forward Euler with $\Delta t=0.01$, the gains are $K_x=10$, $K_\phi=10$, and $\boldsymbol{\phi}(0)=\mathbf{0}_N$.
We first validate the proposed approach in a representative two-dimensional example. We consider $N=6$ agents tasked with covering an evolving density and set $\varepsilon=6$. The initial density is Gaussian $\bar \rho(\mathbf x,0) = \mathcal{N}([10, 10]^\top, 10 \mathbf{I}_2)$ and evolves according to the advection-diffusion equation
\begin{align} \label{eq:exp2D_FP}
    \partial_t \bar \rho (\mathbf x,t) + \nabla \cdot (\mathbf{v}\bar \rho(\mathbf x,t)) = D \Delta \bar \rho(\mathbf x,t),
\end{align}
complemented with no-flux boundary conditions.
We consider a uniform drift $\mathbf v=[4,8]^\top$ and a diffusion coefficient $D=2$.
This setting may represent, for instance, a contaminant plume transported by a uniform fluid current while diffusing~ \cite{zahugi2013oil,lin2025heterogeneous}. The agents must therefore continuously redistribute to monitor the evolving concentration field.
Figure~\ref{fig:exp2D}a--c shows snapshots at the initial, intermediate, and final times, with \(\partial_t\bar\rho\) computed directly from~\eqref{eq:exp2D_FP}. Starting from random initial positions uniform in $[0,15]^2$, the agents redistribute to cover the density while tracking its drift, and progressively spread out as it diffuses, adapting to the varying target density. The mass and positions error also exponentially decay to zero (their evolutions are omitted here for the sake of brevity). The Wasserstein distance decreases rapidly during the initial transient and then grows linearly, as diffusion increases the variance of the target density and hence the finite-agent quantization error (Fig. \ref{fig:exp2D}d).

Next, we investigate the effect of $\varepsilon$ in a one-dimensional domain. We consider $N=5$ agents covering a Gaussian density with a time-varying mean, namely, $\bar\rho(x,t)=\mathcal{N}(m(t),9)$, where $\dot m(t)=2t$ and $m(0)=10$, over a time horizon $t \in [0,5]$. Figures~\ref{fig:exp1D_eps01}--\ref{fig:exp1D_eps10} show the final agent configurations for different values of $\varepsilon$. For $\varepsilon=0.1$, the agents effectively cover the density while maintaining distinct positions: three agents are closely spaced near the peak of the Gaussian, whereas the remaining two lie farther away toward its tails. For the intermediate value $\varepsilon=6$, the three central agents collapse to the same position, while the two outer agents remain distinct. Finally, for $\varepsilon=10>\lambda_\mathrm{max}(\Sigma)=9$, all agents collapse to a single point corresponding to the global barycenter of the target density, consistently with Proposition~\ref{prop:collapsed}.
Figure~\ref{fig:exp1D:metrics} further confirms these theoretical findings, where we vary $\varepsilon \in [0.1, 15]$. For small values of $\varepsilon$, the standard deviation of the agent positions, normalized by its value in the unregularized case, remains close to one, indicating that the ensemble spreads similarly to the baseline configuration. As $\varepsilon$ increases, the normalized standard deviation decreases and becomes zero near the theoretical critical threshold, remaining zero for all values exceeding the variance of the target density. Accordingly, the normalized Wasserstein distance remains close to one for small $\varepsilon$ and increases to approximately 3 when all agents collapse.

\begin{figure*}[htb]
    \centering
    \vspace{0.3cm}
    \subfloat[]
    {
    \includegraphics{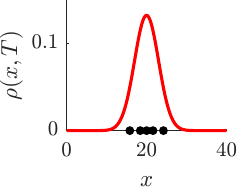}
    \label{fig:exp1D_eps01}
    }
    \subfloat[]
    {
    \includegraphics{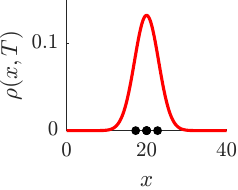}
    \label{fig:exp1D_eps6}
    }
    \subfloat[]
    {
    \includegraphics{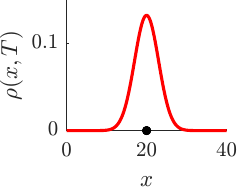}
    \label{fig:exp1D_eps10}
    }
    \subfloat[]
    {
    \includegraphics{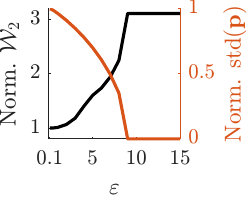}
    \label{fig:exp1D:metrics}
    }
    \caption{Validation of the control law with five agents (black dots) covering the target density (red solid line) at the final time for (a) \(\varepsilon=0.1\), (b) \(\varepsilon=6\), and (c) \(\varepsilon=10\). (d) Effect of \(\varepsilon\) on the final approximation error, measured in terms of \(\mathcal{W}_{2}\) (black line), and on the ensemble spread, measured by \(\operatorname{std}(\mathbf p)\) (orange line). Both quantities are normalized by their corresponding values for the unregularized solution.}
    \label{fig:exp1D}
\end{figure*}

Finally, we compare the proposed solution with the Voronoi-based approach~\cite{lee2015multirobot}, obtained in the unregularized case by setting \(\boldsymbol{\phi}(t)=\mathbf{0}_N\) for all \(t\geq 0\). Since the Voronoi partition does not enforce equal cell masses, the agents concentrate less around the Gaussian peak than under the unregularized OT solution, resulting in a larger final approximation error. Moreover, the entropic controller remains more accurate for all tested values \(\varepsilon\leq 3\), as reported in Table~\ref{tab:comparison} for four representative values of \(\varepsilon\). Finally, the suboptimality gap \(\mathcal{W}_{2,\varepsilon}^2-\mathcal{W}_2^2\) consistently remains below the upper bound \(\varepsilon\log N\) established in Proposition~\ref{prop:subopt_gap}. Notice that at a collapsed configuration, all agents coincide and the regularized optimal coupling has zero KL divergence, consequently, \(\mathcal{W}_{2,\varepsilon}^2=\mathcal{W}_2^2\).
\begin{table}[htb]
    \centering
    \begin{tabular}{|c|c|c|c|c|c|c|c|}
    \hline
         & Voronoi & $\varepsilon=0$ & $\varepsilon=0.1$ & $\varepsilon=3$ & $\varepsilon=4$ & $\varepsilon=15$ \\
         \hline
        $\mathcal{W}_2^2$ & $0.76$ & $0.46$ & $0.46$ & $0.65$ & $0.91$ & $4.5$ \\
        $\mathcal{W}_{2,\varepsilon}^2 - \mathcal{W}_2^2$ & -- & -- & $0.16$ & $2.52$ & $    2.72$ & $0.00$ \\
        $\varepsilon \log  N$ & -- & -- & $0.16$ & $4.83$ & $6.44$ & $24.14$ \\
        \hline
    \end{tabular}
    \caption{Quantitative metric comparison between Voronoi-based solution \cite{lee2015multirobot}, unregularized OT ($\varepsilon=0$) \cite{napolitano2026optimal}, and proposed solution for representative values of $\varepsilon \in [0.1, 15]$.}
    \label{tab:comparison}
\end{table}

\section{Conclusion}
This paper introduced an entropy-regularized optimal transport framework for time-varying coverage control, with application to multi-agent density tracking. The smooth assignment induced by entropic regularization enabled a control law that exponentially tracks the evolving first-order optimality conditions, together with an explicit threshold above which the collapsed configuration becomes locally optimal. Numerical experiments confirmed the theory and, for moderate regularization, have comparable performance with our unregularized approach~\cite{napolitano2026optimal}, while outperforming Voronoi-based coverage~\cite{lee2015multirobot}. Future work will focus on distributed implementations, adaptive selection of $\varepsilon$, extensions to constrained dynamics and collision avoidance, and a theoretical characterization of the conditions ensuring the invertibility of $\mathbf M^\varepsilon$.

\bibliographystyle{IEEEtran}        
\bibliography{refs}

\end{document}